\newcommand{\IR}{\mathbb{R}}
\newcommand{\question}[1]{\leavevmode{\marginpar{\tiny%
$\hbox to 0mm{\hspace*{-0.5mm}$\leftarrow$\hss}%
\vcenter{\vrule depth 0.1mm height 0.1mm width \the\marginparwidth}%
\hbox to 0mm{\hss$\rightarrow$\hspace*{-0.5mm}}$\\\relax\raggedright #1}}}
\newcommand{\kat}{\mathcal{K}}
\newcommand{\IMM}{\mathscr{M}}
\newcommand{\K}{B}
\newcommand{\ICC}{\mathsf{C}^{\infty}}
\newcommand{\loc}{\mathrm{loc}}
\newcommand{\ILL}{\mathscr{L}}
\newcommand{\IFF}{\mathscr{F}}
\newcommand{\z}{ \Id \mu }
\renewcommand{\c}{ \mathrm{c} }
\newcommand{\IL}{\mathsf{L}}
\newcommand{\dom}{\mathrm{Dom}}
\newcommand{\IN}{\mathbb{N}}
\newcommand{\IP}{\mathbb{P}}
\newcommand{\Id}{{\rm d}}
\newcommand{\f}{\frac}
\newcommand{\nn}{\nonumber}
\theoremstyle{plain}            
\newtheorem{theorem}{theorem}[section]
\newtheorem{Corollary}[theorem]{Corollary}
\newtheorem{Theorem}[theorem]{Theorem}
\newtheorem{Proposition}[theorem]{Proposition}
\newtheorem{Openproblem}[theorem]{Open problem}
\theoremstyle{definition}       
\newtheorem{Definition}[theorem]{Definition}
\newtheorem{Remark}[theorem]{Remark}
\newtheorem{Example}[theorem]{Example}
\begin{document}

\begin{titlepage}
\title[]{Heat kernels in the context of Kato potentials on arbitrary manifolds}

\author[B. G\"uneysu]{Batu G\"uneysu}

\end{titlepage}
\setcounter{page}{1}
\maketitle 

\begin{abstract} By introducing the concept of \emph{Kato control pairs} for a given Riemannian minimal heat kernel, we prove that on every Riemannian manifold $(M,g)$ the Kato class $\kat(M,g)$ has a subspace of the form $\IL^q(M,\Id\varrho)$, where $\varrho$ has a continuous density with respect to the volume measure $\mu_g$ (where $q$ depends on $\dim(M)$). Using a local parabolic $\IL^1$-mean value inequality, we prove the existence of such densities for every Riemannian manifold, which in particular implies $\IL^q_{\loc}(M)\subset\kat_\loc(M,g)$. Based on previously established results, the latter local fact can be applied to the question of essential self-adjointness of Schrödinger operators with singular magnetic and electric potentials. Finally, we also provide a Kato criterion in terms of minimal Riemannian submersions.
\end{abstract}

\section{Introduction}
%

Given a Riemannian manifold $(M,g)$ with $\mu_g$ the Riemannian volume measure, a Borel function $w:M\to\IR$ is said to be Kato class of $(M,g)$, symbolically $w\in\kat(M,g)$, if 
$$
\lim_{t\to 0+}\sup_{x\in M}\int^t_0\int_M \mathrm{e}^{(s/2)\Delta_g}(x,y) |w(y)|\Id\mu_g(y)  \Id s=0,
$$
where $(-1/2)\Delta_g\geq 0$ denotes the Friedrichs realization of $(1/2)$ times the Laplace-Beltrami operator in $\IL^2(M,\Id\mu_g)$. In particular, $\mathrm{e}^{\f{s}{2}\Delta_g}(x,y)$ is precisely the minimal nonnegative heat kernel $p_g(s,x,y)$ on $(M,g)$. Likewise, there is the local counterpart $\kat_{\loc}(M,g)$, which is given by all $w$ such that $1_Kw\in\kat(M,g)$ for all compact $K\subset M$. Ever since its introduction, the Kato class has proved to be a convenient and large class of perturbations of $(-1/2)\Delta_g$, for which the following important results hold true simultaniously: For every $w=w_+-w_-$ such that its positive part satisfies $w_+\in \IL^1_{\loc}(M)$, its negative part satisfies $w_-\in \kat(M,g)$, 
\begin{itemize}
\item[I)] $w_-$ is an infinitesimally small perturbation of $(-1/2)\Delta_g$ (cf. \cite{peter}) in the sense of quadratic forms; in particular the form sum $H^w_{g}=(-1/2)\Delta_g\underline{+}w$ is a well-defined self-adjoint operator in $\IL^2(M,\Id\mu_g)$ which is bounded from below
\item[II)] one has $\IL^q(M,\Id\mu_g)\to \IL^q(M,\Id\mu_g)$-bounds of the form (cf. Proposition \ref{bop} below) 
$$
\left\|\mathrm{e}^{-tH^w_{g}}\right\|_{\IL^q(M,\Id\mu_g)\to \IL^q(M,\Id\mu_g)}\leq \delta \mathrm{e}^{t C(\delta,w_-,g)}, \>\text{ for every $\delta>1$}
$$
\item[III)] $x\mapsto\mathrm{e}^{-tH^w_{g}}f(x)$ is continuous \cite{mati} for all $f\in \IL^{\infty}(M,\Id\mu_g)$ \cite{mati}
\end{itemize}
The remarkable fact about these results is that all of them do not require any additional assumptions on the Riemannian structure $g$ on $M$. The bound from II) with $q=\infty$ has been used recently in the context of the Riemannian total variation by D. Pallara and the author in \cite{G97}. Let us also note that one can even establish a semigroup theory of perturbations given by Kato measures rather than Kato functions: Here there exist very subtle results by Sturm \cite{sturm}, Stollmann-Voigt \cite{peter}, and Kuwae-Takahashi \cite{kt}, and can even do more general than that \cite{mati}.\\
There is another important result which is built on the local Kato class \cite{grumt}: 
\begin{itemize}
\item[IV)] If $(M,g)$ is geodesically complete, if $\alpha\in \Gamma_{\IL^4_\loc}(M,T^* M)$ is a magnetic potential with $\mathrm{grad}(\alpha)\in\IL^2_\loc(M)$, and if $w\in\kat_\loc(M,g)\cap \IL^2_\loc(M)$ is an electric potential such that the corresponding magnetic Schrödinger operator $H^{\alpha,w}_{g}$ is bounded from below on the smooth compactly supported functions, then $H^{\alpha,w}_{g}$ is in fact essentially self-adjoint. 
\end{itemize}

Apart from the above \lq\lq{}success\rq\rq{} of the Kato class from an abstract point of view, as one knows the explicit form of $p_g(t,x,y)$ only in very few cases, the following question remains: \\
\emph{When is a given Borel function $w$ on $M$ actually in $\kat(M,g)$ or in $\kat_\loc(M,g)$?} \\
In the Euclidean $\IR^m$ this question is usually easy to answer, as one has the characterization $w\in \kat(\IR^m)$, if and only if
\begin{align}\label{eins}
&w\in \IL^1_{\mathrm{unif, loc}}(\IR),\>\text{ if $m=1$},\\ \label{einse}
&\lim_{r\to 0+}\sup_{x\in \IR^m} \int_{|x-y|\leq r}|w(y)| h_m(|x-y|)\Id y,\>\text{ if $m\geq 2$,}
\end{align}
where $h_m:[0,\infty]\to [0,\infty]$ is given by
$$
h_2(r):=\log^+(1/r),\>\>h_m(r):=r^{2-m},\>\text{ if $m>2$,} 
$$
In fact Kato has introduced $\kat(\IR^m)$ essentially in this \lq\lq{}analytic\rq\rq{} form in \cite{kato}, and the equivalence of the latter definition to the above heat-kernel definition has been shown by Aizenman-Simon in \cite{aizi}. The characterization (\ref{eins})+(\ref{einse}) straightforwardly implies the inclusions\footnote{ uniformly local $q$-integrability instead of globally $q$-integrability is sufficient}
\begin{align}\label{einsee}
\IL^q(\IR^m)\subset \kat(\IR^m)\>\text{ if $q\geq 1$ in case $m=1$, and if $q>m/2$ in case $m\geq 2$,}
\end{align}
which of course implies $\IL^q_\loc(\IR^m)\subset \kat_\loc(\IR^m)$. The proof of the characterization (\ref{eins})+(\ref{einse}) relies on the Gaussian behaviour of the Euclidean heat kernel for small times. On an arbitrary Riemannian $m$-manifold such a Gaussian behaviour need not hold in general. As was shown by Kuwae-Takahashi \cite{kt}, a genuine geometric assumptions that actually implies such a heat kernel behaviour is: Geodesic completeness plus Ricci curvature bounded from below and a positive injectivity radius. In \cite{kt}, the authors also show (actually in a much more general context than Riemannian manifolds) that one has the characterization (\ref{eins})+(\ref{einse}) (with $\Id_g(x,y)$ and the volume measure $\mu_g$ replacing their Euclidean analogues). In particular, now one also has 
\begin{align}\label{einsere}
\IL^q(M,\Id\mu_g)\subset \kat(M,g)\>\>\text{  with $q$ as in (\ref{einse}).}
\end{align}
If one insists on an $\IL^q$-Kato-criterion which is precisely of the form (\ref{einsere}), seemingly there is not much to improve the geometric assumptions of Kuwae and Takahasi. On the other hand, a positive injectivity radius assumption is very restrictive in applications, and furthermore, in view of the above essential self-adjointness result one might want to know if or under which assumptions one has
\begin{align}\label{fpa}
\IL^q_\loc(M)\subset \kat_\loc(M,g)\text{ with $q$ as in (\ref{einse}) }
\end{align}
In this context, we prove the following results in this paper:  \emph{
\begin{itemize} 
\item Given an appropriate continuous control function $I:M\to (0,\infty)$ for the heat kernel $p_g(t,x,y)$ (cf. Definition \ref{uug}) one always has an global inclusion of the form (cf. Theorem \ref{ecl})
$$
\IL^q(M,I\Id \mu_g)\subset \kat(M,g),\> \text{ with $q$ as in (\ref{einse}).}
$$
\item Using a parabolic $\IL^1$-mean value inequality it is possible to prove a generally valid heat kernel estimate of the form
$$
\sup_{y\in M}p_g(t,x,y) \leq C \min(t,R(x)^2)^{-m/2}\>\>\text{ $t>0$, $x\in M$}
$$
where $C>0$ is a universal constant, and $R=R_g:M\to (0,\infty)$ is a continuous function; in particular, $I=R^{-m}$ is a control function as above and, as $R$ is continuous, {\rm one always has (\ref{fpa}) on every Riemannian manifold}; this also implies that one can replace $\kat_\loc(M,g)$ with $\IL^q_\loc(M)$ in the above essential self-adjointness result IV) (cf. Corollary \ref{esl}).
\item If $(M,g)$ is geodesically complete with Ricci curvature bounded from below (without any additional assumption on the injectivity radiius), then one can pick the geometric control function $I_g$ given by the inverse volume function $1/\mu(B(x,1))$ with $B(x,1)$ the open geodesic ball around $x$ with radius $1$ (cf. Example \ref{gpa}), ending up with a global criterion of the form
$$
\IL^q(M,I_g\Id \mu_g)\subset \kat(M,g),\> \text{ with $q$ as in (\ref{einse}).}
$$
\end{itemize}
}
We believe that these results close some rather long standing gaps to the Euclidean case in the context of $\IL^q$-criteria for Kato functions on Riemannian manifolds that need not have a bounded geometry. 

\vspace{1.2mm}

Another well-known Euclidean result is the following: Given a surjective linear map $T:\IR^{m}\to \IR^{m\rq{}}$, one has $w\circ T\in \kat(\IR^m)$ for all $w\in \kat(\IR^{m\rq{}})$. This result is particularly important in the context of many-body quantum mechanics (cf. Example \ref{many}). Using an entirely probabilistic result by D. Elworthy \cite{elworthy} on the projections of Brownian motions, we found the following geometric variant of the latter Euclidean fact: 
\emph{
\begin{itemize}
\item Given a smooth minimal Riemannian submersion $\pi:(M,g)\to (M\rq{},g\rq{})$, one has $\pi^*[\kat(M\rq{},g\rq{})]\subset \kat(M,g)$ (cf. Theorem \ref{elwd}).
\end{itemize}
}

As, by what we have explained above, it is a rather tricky business to check the Kato property on noncompact manifolds, we believe that the latter result is particularly important for the construction of Kato functions in a noncompact curved setting.

\vspace{3mm}

{\bf Acknowledgements:} This research has been supported by the SFB 647: Raum-Zeit-Materie. I would like to thank the referee for very precise and useful remarks.

\section{Main results}

\subsection{Definitions and $\IL^q$-criteria for the Kato class} Let $M$ be a smooth connected manifold of dimension $m$. Given a smooth Riemannian metric $g$ on $M$, we denote with $\K_g(x,r)$ the open geodesic balls and with $\mu_g(x,r):= \mu_g(\K_g(x,r))$ the volume function. The $\IL^q$-spaces corresponding to $\mu_g$ will be denoted with $\IL^q(M,g):=\IL^q(M,\Id\mu_g)$. The minimal nonnegative heat kernel on $(M,g)$ is denoted with
$$
p_g(t,x,y),\>\>\>(t,x,y)\in (0,\infty)\times M\times M.
$$
It is jointly smooth in $(t,x,y)$, and $p_g(\bullet,\bullet,y)$ is uniquely determined as the pointwise minimal function $u:(0,\infty)\times M\to [0,\infty)$ which satisfies
\begin{align}\label{con}
(1/2)\Delta_{g}u(t,x)=\partial_t u(t,x),\>\>\lim_{t\to 0+}u(t,\bullet)=\delta_y,
\end{align}
where $\Delta_{g}=\Id^{g}\Id$ denotes the negative definite Laplace-Beltrami operator. 

\begin{Remark} Equivalently \cite{buch}, $p_g(t,x,\bullet)$ is uniquely determined by  
$$
p_g(t,x,y)=\mathrm{e}^{(t/2)\Delta_g}(x,y),
$$
where here and in the sequel, by the usual abuse of notation, $(-1/2)\Delta_g\geq 0$ denotes the Friedrichs realization of the Laplace-Beltrami operator in the complex Hilbert space $\IL^2(M,g)$. In particular, the symmetry $p_g(t,x,y)=p_g(t,y,x)$ is obvious from this point of view.
\end{Remark}

One always has
\begin{align}\label{gop}
\int p_g(t,x,y)\Id\mu(y)\leq 1\text{ for all $t>0$, $x\in M$}
\end{align}
and the Chapman-Kolmogorov identity
$$
\int p_g(t,x,z)p_g(s,z,y)\Id\mu_g(z)=p_g(t+s,x,y)\>\>\>\text{ for all $(t,s)\in (0,\infty)^2$, $(x,y)\in M^2$.}
$$
If one has equality in (\ref{gop}) for some/all $(t,x)$, then $(M,g)$ is called \emph{stochastically complete}. Under our standing assumption of connectedness, one always has $p_g(t,x,y)>0$. \vspace{1.2mm}

It will be convenient to denote with $\IMM(M)$ the space of all smooth Riemannian metrics on $M$.\vspace{1.2mm}

\begin{Definition} The \emph{Kato class} $\kat(M,g)$ of $g\in \IMM(M)$ is defined to be space of all Borel functions $w:M\to\IR$, such that
$$
\lim_{t\to 0+}\sup_{x\in M}\int^t_0\int_M p_g(s,x,y) |w(y)|\Id\mu_g(y)  \Id s=0,
$$
and the \emph{local Kato class} $\kat_\loc(M,g)$ is given by all Borel $w:M\to\IR$, such that $1_Kw\in \kat(M,g)$ for any compact $K\subset M$.
\end{Definition}

Both $\kat(M,g)$ and $\kat_\loc(M,g)$ are linear spaces, and in view of 
$$
\int p_g(t,x,y)\Id\mu(y)\leq 1,
$$
we always have the trivial inclusion $\IL^{\infty}(M,g)\subset \kat(M,g)$, noting that in fact $\IL^{\infty}(M,g)$ does not depend on a particular choice of $g$. Using futher that $p_g(t,x,y)>0$ is continuous in $(t,x,y)$ it also follows easily that $\kat_\loc(M,g) \subset \IL^1_\loc(M)$ (cf. \cite{G1}). \\
Let us continue with (weighted) $\IL^q$-criteria for the Kato class. To this end, we propose:

\begin{Definition}[Control data]\label{uug} a) An ordered pair $(I,\tilde{I})$ given by 
\begin{itemize}\item a continuous function $I:M\to (0,\infty)$  
 \item a continuous function $\tilde{I}:(0,1]\to (0,\infty)$ such that for all $q\geq 1$ in case $m=1$, and all $q>m/2$ in case $m\geq 2$, one has
\begin{align}\label{expp}
\int^{1}_0 \tilde{I}(s)^{1/q}\Id s<\infty,
\end{align}
\end{itemize}
is called a \emph{Kato control pair for $g\in\IMM(M)$}, if for every $x\in M$ and every $0<t\leq 1$ one has
$$
\sup_{y\in M} p_g(t,x,y) \leq  I(x)\tilde{I}(t).
$$
b) An ordered pair $(R,a)$ given by a continuous bounded function $R:M\to (0,\infty)$ and a number $a>0$ is called a \emph{Faber-Krahn control pair for $g\in\IMM(M)$}, if for all $x\in M$, the ball $\K_g(x,R(x))$ is relatively compact and if for every open $U\subset \K_g(x,R(x))$ one has the Faber-Krahn type inequality
\begin{align}\label{test}
\min\sigma(H_{g|_U})\geq a  \mu_g(U)^{-\f{2}{m}}.
\end{align}
\end{Definition}

The following notation will be convenient in the sequel:

\begin{Definition} Given a Borel function $\Psi\geq 0$ on $M$ and $q\in [1,\infty)$, we denote with $\IL^q(M,g,\Psi)$ the $\IL^q$-space on $M$ with respect to the Borel measure $\Psi  \Id\mu_g $. 
\end{Definition}

The importance of Kato control pairs stems from the following observation:

\begin{Theorem}\label{ecl} For all 
\begin{itemize}
\item $g\in \IMM(M)$,
\item Kato control pairs $(I,\tilde{I})$ for $g$,
\item  $1\leq q <\infty$ such that $q\geq 1$ if $m=1$, and $q > m/2$ if $m\geq 2$,
\item  Borel functions $w:M\to\IR$, 
\item $x\in M$, and all $0<s\leq 1$,
\end{itemize}
one has the bound
\begin{align} \label{ggdp}
\int p_g(s,x,y)  |w(y)|\Id\mu_g(y)\leq \tilde{I}(s)^{\f{1}{q}}\left(\int |w(y)|^{q}I(y)\Id\mu_g(y)\right)^{\f{1}{q }}.
\end{align}
In particular, for every $g\in\IMM(M)$, and for every choice of $q$ and $I$ as above one has 
\begin{align}\label{gops}
 \mathsf{L}^{q}(M,g,I)\subset \kat(M,g).
\end{align}
\end{Theorem}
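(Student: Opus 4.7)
The plan is to reduce everything to a H\"older-type estimate that exploits the symmetry of the heat kernel together with the pointwise bound provided by a Kato control pair, and then to integrate in time and apply the integrability condition (\ref{expp}).

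First, I would observe that by the symmetry $p_g(s,x,y)=p_g(s,y,x)$, the defining bound $\sup_{y}p_g(t,x,y)\leq I(x)\tilde I(t)$ immediately yields the \emph{pointwise} estimate $p_g(s,x,y)\leq I(y)\tilde I(s)$ for all $s\in(0,1]$ and all $x,y\in M$. This is the crucial rewriting, because it moves the space-dependence from $x$ to $y$, where the integration takes place.

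Next, I would split the heat kernel using its exponent $1=\tfrac{q-1}{q}+\tfrac{1}{q}$: write
\[
p_g(s,x,y)=p_g(s,x,y)^{(q-1)/q}\cdot p_g(s,x,y)^{1/q},
\]
and apply H\"older's inequality with conjugate exponents $q'=q/(q-1)$ and $q$ (the case $q=1$ being handled directly without H\"older, since the estimate then reduces to the previous pointwise bound):
\[
\int p_g(s,x,y)|w(y)|\Id\mu_g(y)\leq \left(\int p_g(s,x,y)\Id\mu_g(y)\right)^{(q-1)/q}\left(\int p_g(s,x,y)|w(y)|^{q}\Id\mu_g(y)\right)^{1/q}.
\]
The first factor is bounded by $1$ thanks to (\ref{gop}), while for the second factor I would insert the pointwise estimate $p_g(s,x,y)\leq I(y)\tilde I(s)$ from the previous paragraph. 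This produces exactly the bound (\ref{ggdp}), uniformly in $x\in M$ and in $s\in(0,1]$.

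Finally, to deduce the inclusion (\ref{gops}), I would integrate (\ref{ggdp}) against $\Id s$ over $(0,t)$ for $t\leq 1$, pull the $x$-independent factor out of the supremum, and obtain
\[
\sup_{x\in M}\int_0^t\int p_g(s,x,y)|w(y)|\Id\mu_g(y)\,\Id s\leq \|w\|_{\IL^{q}(M,g,I)}\int_0^t\tilde I(s)^{1/q}\Id s.
\]
The right-hand side tends to $0$ as $t\to 0+$ precisely because of the integrability hypothesis (\ref{expp}) on $\tilde I^{1/q}$, which gives $w\in\kat(M,g)$. I do not foresee a real obstacle here: the entire argument is built on the symmetry of $p_g$ and a single H\"older split, and the integrability condition in the definition of a Kato control pair is exactly what is tailored to make the time integral vanish.
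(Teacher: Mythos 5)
Your argument is correct and is essentially identical to the paper's proof: both rely on symmetrizing the heat kernel bound to get $p_g(s,x,y)\leq I(y)\tilde I(s)$, split $p_g$ as $p_g^{1/q^*}p_g^{1/q}$ and apply H\"older with conjugate exponents $q^*$ and $q$, bound the first factor by $1$ via (\ref{gop}), and feed the pointwise estimate into the second factor, with $q=1$ handled directly. The final integration in $s$ using (\ref{expp}) to conclude (\ref{gops}) is also the intended (and only) route, which the paper leaves implicit.
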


\begin{proof} Let us record the inequality
\begin{align}\label{got}
\sup_{x\in M}p_g(s,x,y)\leq  I(y) \tilde{I}(s),
\end{align}
for $s\leq 1$, which follows from $p_g(s,x,y)=p_g(s,y,x)$ . It is sufficient to show (\ref{ggdp}). In order to derive the latter note first that the case $q=1$ (which is only allowed for $m=1$) is obvious from (\ref{got}), so assume $q>1$. Here the idea is to bound $\int  p_g(s,x,y)|w(y)|\Id \mu_g(y)$ is to factor the heat kernel appropriately: Indeed, with $1/q +1/q^*=1$, Hölder\rq{}s inequality, and (\ref{mark}) we can estimate as follows,
\begin{align*}
&\int p_g(s,x,y)|w(y)|\Id \mu_g(y)=\int p_g(s,x,y)^{\f{1}{q^*}} p_g(s,x,y)^{1-\f{1}{q^*}}|w(y)|\Id \mu_g(y)\\
&\leq \Big(\int p_g(s,x,y)\Id\mu_g(y)\Big)^{\f{1}{q^*}}\Big(\int |w(y)|^{q}p_g(s,x,y)\Id\mu_g(y)\Big)^{\f{1}{q}}\\
&\leq \Big(\int |w(y)|^{q} \tilde{I}(s)I(y) \Id\mu_g(y)\Big)^{\f{1}{q}}\leq\tilde{I}(s)^{\f{1}{q}}\Big(\int |w(y)|^{q}I(y)\Id\mu_g(y)\Big)^{\f{1}{q }}.
\end{align*}
This completes the proof.
\end{proof}

The following examples provide some typical examples of the above notions. In particular, it shows that \emph{every smooth Riemannian manifold admits a Faber-Krahn control pair.}

\begin{Example}\label{klke} 
1. Given $g\in \IMM(M)$, assume that there exists a constant $C>0$ with
$$
\sup_{x\in M}p_g(t,x,x)\leq C t^{-m/2}\>\>\text{ for all $0<t\leq  1$}.
$$
Then using the inequality
$$
p_g(t,x,y)\leq \sqrt{p_g(t,x,x)}\sqrt{p_g(t,y,y)},
$$
which follows easily from the above general properties of the minimal heat kernel, we find that $(I(x), \tilde{I}(t)):=(C,t^{-m/2})$ is a Kato control pair for $g$, which is constant in its first slot.\\
2. For an arbitrary $g\in \IMM(M)$, given $x\in M$, $b>1$, define a Euclidean radius $r_{\mathrm{Eucl},g}(x,b)$ of accuracy $b$ to be the supremum of all $r>0$ such that $\K_g(x,r)$ is relatively compact and admits a chart with respect to which one has one has the following inequality for all $y\in \K_g (x,r)$, 
\begin{align}
\f{1}{b}(\delta_{ij})\leq (g_{ij}(y))\leq b (\delta_{ij})\>\text{ as symmetric bilinear forms}. 
\end{align}
Then for all $\epsilon_1>0, \epsilon_2>1$ the function defined by $R(x):=\min(r_{\mathrm{Eucl},g}(x,b),\epsilon_1)/\epsilon_2$ fits into a Faber-Krahn control pair for $g$. In this case, $R$ is even $1/\epsilon_2$-Lipschitz with respect to $\Id_g$.
\end{Example}


Under curvature bounds, one can construct very explicit Kato control pairs. The following example treats manifolds with a Ricci curvature bounded below by a constant in such a context:

\begin{Example}\label{gpa} Assume $m\geq 2$. For every $\kappa\geq 0$ there exist constants $C_j=C_j(\kappa,m)$ which only depend on $\kappa,m$, such that for all geodesically complete $g\in\IMM(M)$ with $\mathrm{Ric}_g\geq- \kappa$, and all $t>0$, $x,y\in M$ one has
 the well-known Li-Yau estimate
$$
p_g(t,x,y)\leq C_1 \mu_g(B_g(x,\sqrt{t}))^{-1}\exp\left(-\f{\Id_g(x,y)^2}{C_2 t}+C_3t\right).
$$
In addition one has the following volume 'doubling': For every $0<s\rq{}\leq s$, $ x\in M$, one has
$$
\mu_g(B(x,s))\leq \mu_g(B_g(x,s\rq{})) (s/s\rq{})^m \exp(\sqrt{(m-1)\kappa}s).
$$
Thus for $t\leq 1$, we have
$$
\mu_g(B_g(x,\sqrt{t}))^{-1}\leq C_4\mu_g(B_g(x,1))^{-1}t^{m/2} ,
$$
and we have derived the heat kernel control pair given by
\begin{align*}
&I(x):= C_5\mu_g(B_g(x,1))^{-1},\quad \tilde{I}(t):=t^{m/2} ,
\end{align*}
where $C_4=C_4(\kappa,m)>0$,  $C_5=C_5(\kappa,m)>0$.
\end{Example} 


We immediately obtain the following new result:

\begin{Corollary} Assume $m=\dim(M)\geq 2$. Then given a geodesically complete $g\in\IMM(M)$ with $\mathrm{Ric}_g\geq - \kappa$ for some $\kappa>0$, it follows that for every  $m/2<q < \infty$ one has
$$
 \mathsf{L}^{q}(M,g,\mu_g(\bullet,1)^{-1})\subset \kat(M,g).
$$
More precisely, for every $\kappa>0$ there exists a constant $C=C(m,\kappa)>0$, such that for all 
\begin{itemize}
\item geodesically complete $g\in\IMM(M)$ with $\mathrm{Ric}_g\geq - \kappa$,
\item $m/2<q < \infty$,
\item Borel functions $w:M\to\IR$,
\item $0< s\leq 1$, $x\in M$,
\end{itemize}
one has the inequality
$$
\int p_g(s,x,y)  |w(y)|\Id\mu_g(y)\leq C^{1/q} s^{-m/(2q)}\left(\int |w (x)|^{q}\mu_g(x,1)^{-1}\Id\mu_g(x) \right)^{\f{1}{q}}.
$$
\end{Corollary}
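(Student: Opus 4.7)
The plan is to derive this corollary as an essentially immediate consequence of Theorem \ref{ecl}, applied to the explicit Kato control pair produced in Example \ref{gpa}. All the substantive analytic work has been done upstream: the Li--Yau upper bound together with the volume doubling inequality, recalled in Example \ref{gpa}, supply for every geodesically complete $g\in\IMM(M)$ with $\mathrm{Ric}_g\geq -\kappa$ a Kato control pair of the shape
\[
I(x):=C_5\,\mu_g(B_g(x,1))^{-1},\qquad \tilde I(s):=s^{-m/2},
\]
with $C_5=C_5(m,\kappa)$, and Theorem \ref{ecl} converts any such pair, for $q$ in the stated range, into a bound of the form (\ref{ggdp}).

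First I would verify that $(I,\tilde I)$ really qualifies as a Kato control pair for the prescribed $q$. Continuity of $I$ reduces to continuity of $x\mapsto \mu_g(B_g(x,1))$, and continuity of $\tilde I$ on $(0,1]$ is immediate. The non-trivial requirement (\ref{expp}) reduces to $\int_0^1 s^{-m/(2q)}\Id s<\infty$, i.e.\ to $m/(2q)<1$, which is exactly the assumption $q>m/2$ made in the corollary. Thus the interval of admissible $q$ is forced upon us by the $t^{-m/2}$-blowup of $\tilde I$ at the origin.

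With this pair in place, the quantitative bound follows by substituting $(I,\tilde I)$ into (\ref{ggdp}): one obtains
\[
\int p_g(s,x,y)|w(y)|\Id\mu_g(y)\leq s^{-m/(2q)}\left(\int |w(y)|^{q}\, C_5\,\mu_g(B_g(y,1))^{-1}\Id\mu_g(y)\right)^{1/q},
\]
and pulling $C_5^{1/q}$ outside gives the displayed estimate with $C:=C_5$. The inclusion $\IL^{q}(M,g,\mu_g(\bullet,1)^{-1})\subset \kat(M,g)$ then follows by integrating in $s$ from $0$ to $t$: since $q>m/2$, the resulting upper bound scales like $t^{\,1-m/(2q)}$ and vanishes as $t\to 0+$, uniformly in $x\in M$.

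Since each ingredient is already proved, there is no genuine obstacle in this argument; it is a piece of bookkeeping. The one thing to watch is tracking the correct sign of the exponent of $t$ in $\tilde I$: the volume doubling inequality from Example \ref{gpa} must be used in the direction that allows replacing $\mu_g(B_g(x,\sqrt t))^{-1}$ by $\mu_g(B_g(x,1))^{-1}\, t^{-m/2}$ for $t\leq 1$, so that the integrability threshold in Definition \ref{uug} translates exactly into the condition $q>m/2$ appearing in the statement.
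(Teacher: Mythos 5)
Your proposal is correct and follows precisely the paper's intended route: the corollary is stated immediately after Example \ref{gpa} and Theorem \ref{ecl}, and is simply the specialization of the bound (\ref{ggdp}) to the control pair $I(x)=C_5\,\mu_g(B_g(x,1))^{-1}$, $\tilde I(s)=s^{-m/2}$, with the constant $C_5$ pulled out. You also correctly spotted and silently corrected the sign typo in the paper's Example \ref{gpa} (which prints $\tilde I(t)=t^{m/2}$ where the volume-doubling step actually yields $t^{-m/2}$, as the exponent $s^{-m/(2q)}$ in the Corollary's display confirms).
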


Let us return to the case of general manifolds again. Part b) of the following result shows that every Faber-Krahn control pair canonically induces a Kato control pair:


\begin{Theorem}\label{rtt} There exists a constant $C=C(m)>0$, which only depends on $m$, such that for every $g\in \IMM(M)$, every Faber-Krahn control pair $(R,a)$ for $g$, and every $t>0$, $x\in M$, one has
\begin{align}\label{hh2}
\sup_{y\in M} p_g(t,x,y) \leq  Ca^{-m/2}\min(t,R(x)^2)^{-m/2}.
\end{align}
In particular, there exists a constant $C=C(m)>0$, such that for every $g\in \IMM(M)$ and every Faber-Krahn control pair $(R,a)$ for $g$, the assignment 
\begin{align}\label{abs}
I(x):= Ca^{-m/2} R(x)^{-m} ,\quad \tilde{I}(t):=   \big(t^{-m/2}(\sup R^{m})+1\big)
\end{align}
defines a Kato control pair for $g$.
\end{Theorem}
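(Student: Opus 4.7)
The plan is to split the proof into the heat-kernel estimate (\ref{hh2}) and the verification that the explicit pair (\ref{abs}) satisfies the Kato control-pair axioms. The engine for (\ref{hh2}) will be a local parabolic $\IL^1$-mean value inequality, which the introduction advertises as the technical centerpiece of the paper: inside a ball $\K_g(x, R(x))$ in which the Faber-Krahn inequality (\ref{test}) holds with constant $a$, every nonnegative (sub)solution $u$ of $\partial_s u = (1/2)\Delta_g u$ on $(0,T) \times \K_g(x, R(x))$, for $T \leq R(x)^2$, should obey a bound of the shape
$$u(T, x) \leq \frac{C(m)\, a^{-m/2}}{T^{1+m/2}} \int_0^T \int_{\K_g(x, R(x))} u(s, z)\, \Id\mu_g(z)\, \Id s.$$
Applying this with $u(s, z) := p_g(s, z, y)$ for an arbitrary fixed $y \in M$ (legitimate since $p_g$ is a smooth nonnegative global solution) and using (\ref{gop}) together with $p_g(s, z, y) = p_g(s, y, z)$ to bound the double integral by $T$ yields
$$p_g(T, x, y) \leq C(m)\, a^{-m/2}\, T^{-m/2} \quad \text{for every } y \in M,\ T \leq R(x)^2;$$
taking the supremum over $y$ already proves (\ref{hh2}) in the small-time regime.

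For $t > R(x)^2$ the argument will be closed via Chapman-Kolmogorov split at $s := R(x)^2$:
$$p_g(t, x, y) = \int_M p_g(s, x, z)\, p_g(t-s, z, y)\, \Id\mu_g(z) \leq \Big(\sup_{z \in M} p_g(s, x, z)\Big) \int_M p_g(t-s, z, y)\, \Id\mu_g(z).$$
The outer factor is $\leq C(m)\, a^{-m/2}\, R(x)^{-m}$ by the small-time bound just obtained at $s = R(x)^2$, and the integral is $\leq 1$ by (\ref{gop}). Since $\max(t^{-m/2}, R(x)^{-m}) = \min(t, R(x)^2)^{-m/2}$, the two regimes assemble to (\ref{hh2}) for all $t > 0$.

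For part b), it suffices to check the three axioms of Definition \ref{uug}(a) for (\ref{abs}). Continuity and positivity of $I$ are inherited from $R$, and boundedness of $R$ keeps $\sup R^m$ finite, so $\tilde{I}$ is continuous and positive on $(0,1]$. The integrability $\int_0^1 \tilde{I}(s)^{1/q}\, \Id s < \infty$ reduces, via the elementary estimate $(s^{-m/2}\sup R^m + 1)^{1/q} \leq C(s^{-m/(2q)} + 1)$ on $(0,1]$, to $\int_0^1 s^{-m/(2q)}\, \Id s < \infty$, which holds both for $m=1,\ q \geq 1$ and for $m \geq 2,\ q > m/2$. Finally, using $\sup R^m \geq R(x)^m$,
$$I(x)\,\tilde{I}(t) \geq C(m)\, a^{-m/2}\big(t^{-m/2} + R(x)^{-m}\big) \geq C(m)\, a^{-m/2}\,\min(t, R(x)^2)^{-m/2},$$
and the right-hand side dominates $\sup_y p_g(t, x, y)$ by part a).

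The hard part will be the parabolic $\IL^1$-mean value inequality itself: one must extract a precise form with constants depending only on the dimension $m$ and the Faber-Krahn constant $a$, under only a local Faber-Krahn assumption in the bounded ball $\K_g(x, R(x))$ and \emph{without} any global structural hypothesis (geodesic completeness, injectivity-radius lower bound, or global FK) on $(M, g)$. This Moser-type ingredient would either be isolated as a separate lemma or adapted from Grigoryan's classical derivation, with careful bookkeeping to make sure no global geometry sneaks into the constants.
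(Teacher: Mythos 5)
Your proposal is correct and uses the same engine as the paper: a local parabolic $\IL^1$-mean value inequality under a Faber--Krahn hypothesis on $\K_g(x,R(x))$, combined with the heat-kernel mass bound $\int_M p_g(s,z,y)\,\Id\mu_g(z)\leq 1$ to control the space-time integral, followed by the elementary verification of the Kato control-pair axioms for~(\ref{abs}).

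There is one mild divergence worth flagging. You state the MVI with the solution living on $(0,T)\times\K_g(x,R(x))$ and the evaluation at time $T\leq R(x)^2$, which only yields~(\ref{hh2}) in the regime $t\leq R(x)^2$; you then close the gap for $t>R(x)^2$ by a Chapman--Kolmogorov split at $s=R(x)^2$. The paper instead formulates its Proposition (the $\IL^q$-MVI, $1\leq q\leq2$) with the solution defined on the backward time window $(t-\tau,t]\times\K_g(x,\sqrt\tau)$ for any $t\geq\tau$, $\tau\in(0,r^2]$, so a single application with $\tau=\min(R(x)^2,t)$ handles both regimes simultaneously and avoids Chapman--Kolmogorov altogether. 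Both routes are correct and equally elementary; your version trades a slightly weaker (time-anchored) MVI for one extra semigroup step. As you anticipated, the genuine technical content is the MVI itself --- the paper proves it as a separate Proposition by starting from Grigor\textquoteright yan's $\IL^2$-MVI (Theorem 15.1 of the cited heat-kernel book) and running a Li--Wang type $\IL^2$-to-$\IL^q$ iteration --- and your write-up correctly identifies this but does not carry it out, so the proof is incomplete on that front.

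Minor points, both fine as written: your reduction of $\int_0^1\tilde I(s)^{1/q}\Id s<\infty$ to $\int_0^1 s^{-m/(2q)}\Id s<\infty$ matches the paper's convergence condition, and the chain $I(x)\tilde I(t)\geq C a^{-m/2}\bigl(t^{-m/2}+R(x)^{-m}\bigr)\geq C a^{-m/2}\min(t,R(x)^2)^{-m/2}$ via $\sup R^m\geq R(x)^m$ is exactly the one-line verification the paper records before the proof.
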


Once one has (\ref{hh2}), the fact that (\ref{abs}) defines a Kato control pair follows from
$$
 \min(t,R(x)^2)^{-m/2}\leq t^{-m/2}+R(x)^{-m}\leq R(x)^{-m}\big(t^{-m/2}(\sup R^m)+1\big). 
$$
The proof of (\ref{hh2}) requires a parabolic $\IL^1$-mean value inequality (MVI), the latter of which has been stated in \cite{gri} without proof. There it is also pointed out that the parabolic $\IL^1$-MVI can be deduced from its well-known $\IL^2$-analogue by using methods from \cite{li2}. As it does not cause much extra work and as it could be useful elsewhere, we give a detailed proof of a parabolic $\IL^q$-mean value inequality ($1\leq q \leq 2$), for the convenience of the reader:

\begin{Proposition}[Parabolic $\IL^q$-MVI]\label{rtt0} There exists a constant $C=C(m)>0$, which only depends on $m$, with the following property:   
\begin{itemize}
\item for all $g\in \IMM(M)$, $x\in M$, $r>0$ with $\K_g(x,r)$ relatively compact and admitting a constant $a>0$ such that for every open $U\subset \K_g(x,r)$ one has the Faber-Krahn inequality (\ref{test}), 
\item for all $\tau\in (0,r^2]$, $t\geq \tau$,
\item for all nonnegative solutions $u$ of the $g$-heat equation
$$
\partial_t u=(1/2) \Delta_g u\>\>\text{ in $(t-\tau,t]\times \K_g(x,\sqrt{\tau})$}, 
$$
\item for all $q\in [1,2]$
\end{itemize}
one has the bound
\begin{align}
u(t,x)^q\leq \f{C }{a^{\f{m}{2}} \tau^{1+\f{m}{2}}}\int^{t}_{t-\tau}\int_{\K_g(x,r)} u(s,y)^q \Id\mu_g(y)\Id s.
\end{align}
\end{Proposition}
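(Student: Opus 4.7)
The plan is to carry out a standard parabolic Moser iteration in which the Faber-Krahn inequality (\ref{test}) plays the role usually reserved for a Sobolev inequality, obtaining the result first for $q=2$; the general case $q\in[1,2]$ is then extracted by a Bombieri-Giusti / Li-Schoen self-improvement. All parabolic sub-cylinders appearing in the iteration are taken inside the fixed $Q:=(t-\tau,t]\times \K_g(x,\sqrt{\tau})$, which is contained in the region where the Faber-Krahn hypothesis is available since $\sqrt{\tau}\leq r$.

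First I would establish a parabolic Caccioppoli estimate: for a smooth space-time cutoff $\phi$ supported in $Q$ and vanishing on its parabolic boundary, multiplying the heat equation by $u^{2p-1}\phi^2$ (for $p\geq 1$) and integrating by parts yields control of both $\sup_s\int u^{2p}\phi^2\,\Id\mu_g$ and $\int\!\!\int|\nabla(u^p\phi)|^2\,\Id\mu_g\,\Id s$ by $C_m\,p\int\!\!\int u^{2p}(|\nabla\phi|^2+|\partial_s\phi^2|)\,\Id\mu_g\,\Id s$. In parallel, the Faber-Krahn inequality (\ref{test}) is equivalent, via a layer-cake / coarea argument, to a local Sobolev inequality $\|f\|_{2m/(m-2)}^2\leq C_m a^{-1}\|\nabla f\|_2^2$ for $f\in C^{\infty}_c(\K_g(x,r))$ when $m\geq 3$, with a Nash-type substitute when $m\in\{1,2\}$. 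Interpolating these two via H\"older in time converts each step into an exponent gain of $\theta:=1+2/m>1$.

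Iterating on a geometric sequence of sub-cylinders $Q_k\subset Q$ shrinking to $\{(t,x)\}$, with exponents $p_k=\theta^k$, the cutoff factors $(\rho_k-\rho_{k+1})^{-2}$ telescope and the weights $\theta^{-k}$ sum to a convergent geometric series, producing the pointwise $\IL^2$-MVI
\[
u(t,x)^2 \;\leq\; \frac{C(m)}{a^{m/2}\tau^{1+m/2}} \int_{t-\tau}^{t}\!\int_{\K_g(x,\sqrt{\tau})} u(s,y)^2\,\Id\mu_g(y)\,\Id s,
\]
which is the $q=2$ case (with the integral over the smaller ball, so the stated bound with $\K_g(x,r)$ follows \emph{a fortiori}). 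To descend to $q\in[1,2)$, I would set $\Phi(\rho):=\sup_{Q_\rho}u$ with $Q_\rho:=(t-\rho^2,t]\times\K_g(x,\rho)$, apply the $\IL^2$-MVI after rescaling to obtain $\Phi(\rho')\leq C'(\rho-\rho')^{-(1+m/2)}a^{-m/4}\|u\|_{\IL^2(Q_\rho)}$ for $0<\rho'<\rho\leq\sqrt{\tau}$, split $\|u\|_{\IL^2(Q_\rho)}^2\leq\Phi(\rho)^{2-q}\|u\|_{\IL^q(Q_\rho)}^q$, and iterate on a slowly shrinking geometric sequence of radii so that the $\Phi(\rho)^{(2-q)/2}$ factor is absorbed into a convergent product.

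The main obstacle is the careful bookkeeping required to pin down precisely the constant $C(m)a^{-m/2}\tau^{-1-m/2}$: the successive radii in the Moser iteration must be chosen so that the accumulated constants form a convergent product while producing exactly this power of $\tau$, and the low-dimensional cases $m\in\{1,2\}$ must be handled via a Nash inequality since the Sobolev exponent $2m/(m-2)$ is then unavailable. The $\IL^2\to\IL^q$ descent is standard but similarly requires slow-enough shrinking for the absorption step to avoid divergent constants.
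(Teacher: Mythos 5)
Your proposal follows essentially the same two-step strategy as the paper: establish a parabolic $\IL^2$-mean value inequality on the intrinsically scaled cylinder $(t-\tau,t]\times\K_g(x,\sqrt{\tau})$ using the Faber--Krahn hypothesis, and then downgrade the exponent from $2$ to $q$ via a Li--Schoen/Li--Wang self-improvement argument (applying the $\IL^2$-MVI at interior points and absorbing a $\sup u^{2-q}$ factor through a convergent iteration). The only meaningful differences are presentational: the paper simply cites the $\IL^2$-MVI from Theorem~15.1 of Grigor'yan's book rather than re-deriving it by Moser iteration, and it carries out the $\IL^2\to\IL^q$ reduction with an explicit indexed family of growing cylinders $[t-\tau\sum_{i\le k}4^{-i},t]\times \K_g(x,\sqrt{\tau}\sum_{i\le k}2^{-i})$ and the quantities $S_k=\sup u^{2-q}$, whereas you parametrize by a continuous radius through $\Phi(\rho)=\sup_{Q_\rho}u$; these are interchangeable formulations of the same iteration. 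Your remark that the accumulated constants form a convergent product because $\sum_k k\alpha^k<\infty$ is in fact the correct way to do the bookkeeping, and is slightly more careful than the paper's write-up, which claims $S_k\leq(DQS_{k+1})^\alpha$ without explicitly tracking the extra factor $4^{k(1+m/2)}$ produced by applying the $\IL^2$-MVI at scale $\tau/4^{k+1}$; this factor is harmless precisely for the reason you give, but you should not suppress it.
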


\begin{proof} Applying Theorem 15.1 in \cite{buch} (a slightly different formulation of an $\IL^2$-MVI) to the radius $\sqrt{\tau}$ and to the solution 
$$
(0,\tau]\times \K_g(x,\sqrt{\tau})\ni (s,y)\longmapsto u(t-\tau+s,y)\in [0,\infty)
$$
of the $g$-heat equation in $(0,\tau]\times \K_g(x,\sqrt{\tau})$, immediately implies the $\IL^2$-MVI
\begin{align}\label{hilf}
u(t,x)^2\leq \f{Ca^{-\f{m}{2}}}{\tau^{1+\f{m}{2}}}\int^{t}_{t-\tau}\int_{\K_g(x,\sqrt{\tau})} u(s,y)^2 \Id\mu_g(y)\Id s.
\end{align}
From here on we apply a modified $\IL^q$-to-$\IL^1$ version of the parabolic $\IL^2$-to-$\IL^1$ reduction machinery from pp. 1269/1270 in \cite{li2}. So let $1\leq q<2$. Setting 
$$
D:=Ca^{-\f{m}{2}}4^{1+m/2},
$$
and applying (\ref{hilf}) with $\tau$ replaced with $\tau/4$ implies
$$
u(t,x)^2\leq D \tau^{-(1+m/2)}\int^{t}_{t-\tau/4}\int_{\K_g(x,\sqrt{\tau}/2)} u(s,y)^2 \Id\mu_g(y)\Id s,
$$
so that setting
$$
Q:=\tau^{-(1+m/2)}\int^{t}_{t-\tau}\int_{\K_g(x,\sqrt{\tau})} u(s,y)^q \Id\mu_g(y)\Id s,
$$
and for every $k\in\IN$,
$$
S_k:=\sup_{\left[t-\tau\sum^k_{i=1}4^{-i},t\right]\times \K_g\left(x,\sqrt{\tau} \sum^k_{i=1}2^{-i}\right)  } u^{2-q},
$$
we immediately get
\begin{align}\label{hfj}
u(t,x)^2\leq D Q S_1.
\end{align}
Let us next prove that for all $k$ one has
\begin{align}\label{rej}
S_k\leq (DQS_{k+1})^{\alpha}.
\end{align}
where
$$
\alpha:=(2-q)/2.
$$

To see the latter, pick 
$$
(s,y)\in \left[t-\tau\sum^k_{i=1}4^{-i},t\right]\times \K_g\left(x,\sqrt{\tau} \sum^k_{i=1}2^{-i}\right) 
$$
with $u(s,y)^{2-q}=S_k$. Applying now (\ref{hilf}) with $t$ replaced with $s$, and $\tau$ replaced with $\tau/4^{k+1}$ and using
$$
\left[t-\tau\sum^{k+1}_{i=1}4^{-i},t\right]\times \K_g\left(x,\sqrt{\tau} \sum^{k+1}_{i=1}2^{-i}\right) \supset \left[s-\tau /4^{k+1},s\right]\times \K_g\left(y,\sqrt{\tau}/2^{k+1}\right) 
$$
to estimate the resulting space-time integral, we get
$$
u(s,y)^2\leq DQ S_{k+1},
$$
which implies (\ref{rej}). We claim that for all $k$ one has
\begin{align}\label{hkkp}
u(t,x)^2\leq D^{\sum^{k}_{i=1}\beta^{-i+1}}Q^{\sum^{k}_{i=1}\beta^{-i+1}} S_k^{\f{1}{\beta^{k-1}}}.
\end{align}
where
$$
\beta:=1/\alpha.
$$
The proof is by induction on $k$: The case $k=1$ has already been shown in (\ref{hfj}). Given the statement for $k$, we have using (\ref{rej}),
\begin{align*}
&u(t,x)^2\leq D^{\sum^{k}_{i=1}\beta^{-i+1}}Q^{\sum^{k}_{i=1}\beta^{-i+1}} S_k^{\f{1}{\beta^{k-1}}}\leq  D^{\sum^{k}_{i=1}\beta^{-i+1}}Q^{\sum^{k}_{i=1}\beta^{-i+1}} D^{1/\beta^k} Q^{1/\beta^k} S_{k+1}^{1/\beta^k}\\
&=D^{\sum^{k+1}_{i=1}\beta^{-i+1}}Q^{\sum^{k+1}_{i=1}\beta^{-i+1}} S_{k+1}^{\f{1}{\beta^{k}}},
\end{align*}
which completes the proof of (\ref{hkkp}). As $(S_k)_k$ is a bounded sequence\footnote{for example, we have (estimating the sums with geometric series)
$$
S_k=\sup_{\left[t-\tau\sum^k_{i=1}4^{-i},t\right]\times \K_g\left(x,\sqrt{\tau} \sum^k_{i=1}2^{-i}\right)  } u^{2-q}\leq  \sup_{\left[t-\f{3}{4}\tau,t\right]\times \K_g\left(x,\sqrt{\tau} \right)  } u^{2-q}<\infty.
$$
}
we now get from letting $k\to\infty$ in (\ref{hkkp}) the bound
$$
u(t,x)^2\leq  D^{\sum^{\infty}_{i=0}\beta^{-i}}Q^{\sum^{\infty}_{i=0}\beta^{-i}} \lim_{k\to\infty}S_k^{\f{1}{\beta^{k-1}}} = (DQ)^{\beta/(\beta-1)}.
$$
Recalling that
$$
\beta/(\beta-1)=1-(2-q)/2=2/q,
$$
the latter bound completes the proof of the $\IL^q$-MVI, in view of $\tau\leq r^2$.
\end{proof}

Being equipped with Proposition \ref{rtt0}, we can now give the

\begin{proof}[Proof of Theorem \ref{rtt}] As we have already stated, it remains to prove (\ref{hh2}). To this end, fix arbitrary $t>0$, $x,y\in M$. As 
$$
(s,z)\mapsto u(s,z):=p_g(s,z,y)
$$
is a nonnegative solution of the $g$-heat equation on $(0,\infty)\times M$, an application of Proposition \ref{rtt0} with $r:=R(x)$ immediately implies
\begin{align*}
p_g(t,x,y)\leq \f{Ca^{-\f{m}{2}}}{\tau^{1+\f{m}{2}}}\int^{t}_{t-\tau}\int_{M} p_g(s,z,y) \Id\mu_g(z)\Id s,
\end{align*}
for all $\tau\in (0,R(x)^2]$. As we have 
\begin{align}
\label{mark}
\int_M p_g(s,z,y\rq{}) \Id\mu_g(z)=\int_M p_g(s,y\rq{},z) \Id\mu_g(z)\leq 1\>\>\text{ for all $(s,y\rq{})\in (0,\infty)\times M$, }
\end{align}
we arrive at $p_g(t,x,y)\leq Ca^{-\f{m}{2}}\tau^{-\f{m}{2}}$, which proves the result, upon taking $\tau:=\min(R(x)^2,t)$.
\end{proof}

In view of Example \ref{klke} and Theorem \ref{ecl} we now immediately get the following result, which we believe is much more subtle than it looks at first sight:

\begin{Corollary}\label{local} Every $g\in \IMM(M)$ admits a Kato control pair. In particular, for every $1\leq q <\infty$ such that $q\geq 1$ if $m=1$, and $q > m/2$ if $m\geq 2$, one has $\mathsf{L}^{q}_{\loc}(M)\subset \kat_{\loc}(M,g)$.
\end{Corollary}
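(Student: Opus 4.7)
The plan is to chain Example \ref{klke}(2), Theorem \ref{rtt}, and Theorem \ref{ecl} in sequence, and then to localise the resulting weighted $\IL^q$-inclusion using continuity of the weight. First I would construct a Faber-Krahn control pair on $(M,g)$. Fixing $b>1$, $\epsilon_1>0$, $\epsilon_2>1$, one notes that $r_{\mathrm{Eucl},g}(\bullet,b)$ is strictly positive (any sufficiently small normal-coordinate chart at $x$ witnesses the bilinear form bound) and $1$-Lipschitz with respect to $\Id_g$: a chart witnessing $r_{\mathrm{Eucl},g}(x,b)\geq r$ also witnesses $r_{\mathrm{Eucl},g}(y,b)\geq r-\Id_g(x,y)$ for nearby $y$ upon restriction. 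Hence $R(x):=\min(r_{\mathrm{Eucl},g}(x,b),\epsilon_1)/\epsilon_2$ is continuous, bounded by $\epsilon_1/\epsilon_2$, and $\K_g(x,R(x))$ is relatively compact. On any open $U\subset\K_g(x,R(x))$ the Euclidean Faber-Krahn inequality, transferred through the chart at the cost of a factor depending only on $m,b,\epsilon_2$, gives $\min\sigma(H_{g|_U})\geq a\,\mu_g(U)^{-2/m}$ for some $a=a(m,b,\epsilon_2)>0$, so $(R,a)$ is a Faber-Krahn control pair for $g$.

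Next, Theorem \ref{rtt} converts $(R,a)$ into the Kato control pair
\[
I(x)=Ca^{-m/2}R(x)^{-m},\qquad \tilde{I}(t)=t^{-m/2}(\sup R^m)+1.
\]
Since $\tilde{I}(s)^{1/q}\leq (\sup R^m)^{1/q}s^{-m/(2q)}+1$ near $s=0$, the integrability condition \eqref{expp} holds for exactly the admissible exponents ($1/(2q)<1$ automatically for $m=1$, $q\geq1$; $m/(2q)<1$ iff $q>m/2$ for $m\geq 2$). Theorem \ref{ecl} then supplies the global inclusion $\IL^q(M,g,I)\subset\kat(M,g)$. For the local statement, given $w\in\IL^q_\loc(M)$ and compact $K\subset M$, continuity of $I$ yields $C_K:=\sup_K I<\infty$, whence
\[
\int |1_K w|^q\,I\,\Id\mu_g\leq C_K\int_K|w|^q\,\Id\mu_g<\infty,
\]
so $1_K w\in \IL^q(M,g,I)\subset\kat(M,g)$, i.e.\ $w\in\kat_\loc(M,g)$.

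The main obstacle is \emph{Step 1}: although Example \ref{klke}(2) already states its conclusion, one still has to verify the Lipschitz continuity of $r_{\mathrm{Eucl},g}(\bullet,b)$ and transport the Euclidean Faber-Krahn inequality to arbitrary open subsets of the charted balls so that the constant $a$ depends only on $m,b,\epsilon_2$. Everything else is either a direct citation or a one-line boundedness argument on compact sets; the heat-kernel analysis has been fully absorbed into Theorem \ref{rtt} and, through it, into the parabolic $\IL^q$-mean value inequality of Proposition \ref{rtt0}.
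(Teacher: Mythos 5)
Your proposal is correct and follows essentially the same route as the paper: invoke Example~\ref{klke}(2) to obtain a Faber-Krahn control pair, pass to a Kato control pair via Theorem~\ref{rtt}, and localise using the continuity of $I$ on compacts together with Theorem~\ref{ecl}. The only difference is that you spell out the verification of Example~\ref{klke}(2) and the integrability check \eqref{expp}, which the paper leaves implicit.
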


\begin{proof} By Example \ref{klke} and Theorem \ref{rtt} we can pick a Kato control pair $(I,\tilde{I})$ for $g$. Given a compact $K\subset M$ and $w\in\mathsf{L}^{q}_{\loc}(M)$ one has
$$
\int_K |w|^{q}  I\Id\mu_g \leq \left(\max_{K}I\right)\int_K |w|^{q}  \Id\mu_g<\infty,
$$
as $I$ is continuous, thus $1_Kw\in \kat(M)$ by Corollary to \ref{ecl}.
\end{proof}

We believe that Corollary \ref{local} suggests the

\begin{Openproblem} Is there a (large) class $\tilde{\IMM}(M)\subset \IMM(M)$ such that for all $g,h\in \tilde{\IMM}(M)$ one has $\kat_{\loc}(M,g)=\kat_{\loc}(M,h)$?
\end{Openproblem}

A systematic treatment of this problem probably requires a generally valid heat kernel bound as in (\ref{hh2}) with a damping Gaussian factor, and a matching lower bound.

\subsection{Essential self-adjointness} Using a result \cite{grumt} on the essential-self-adjointness of Schrödinger operator with singular magnetic potentials, and locally Kato electric potentials, the Corollary \ref{local} implies the following:

\begin{Corollary}\label{esl} Assume that
\begin{itemize}
\item $g\in\IMM(M)$ is geodesically complete, 

\item the real-valued $1$-form (\lq\lq{}the magnetic potential\rq\rq{})
$$
\alpha\in\Gamma_{\IL^4_\loc}(M,T^*M)\>\text{ has a weak gradient $\mathrm{grad}_g(\alpha)\in\IL^2_\loc(M)$, }
$$
\item  $w\in\IL^2_\loc(M)$ is real-valued (\lq\lq{}the electric potential\rq\rq{}) with $w\in\IL^q_\loc(M)$ for some $1\leq q <\infty$, which satisfies $q \geq 1$ if $m=1$, and $q  > m/2$ if $m\geq 2$.
\item the symmetric operator $H^{\alpha,w}_{g}$ in the complex Hilbert space $\IL^2(M,g)$ given by
$$
H^{\alpha,w}_{g}\Psi= -\Delta_g\Psi-2\sqrt{-1} \ g^*(\alpha,\Id\Psi)+\big(\sqrt{-1} \ \mathrm{grad}_g(\alpha)+ |\alpha|^2_{g^*}+w\big)\Psi,\>\>\Psi\in\ICC_{\c}(M),
$$
is bounded from below.
\end{itemize}
Then $H^{\alpha,w}_{g}$ is essentially self-adjoint, in other words, $H^{\alpha,w}_{g}$ has precisely one self-adjoint extension.
\end{Corollary}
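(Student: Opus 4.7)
The plan is to reduce the statement to item IV) from the introduction (the essential self-adjointness result of \cite{grumt}), whose only hypothesis on $w$ that is not already assumed here is that $w\in\kat_\loc(M,g)$. So the entire content of the corollary is to verify, under the present integrability assumption on $w$, that $w$ actually belongs to $\kat_\loc(M,g)$.

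First I would invoke Corollary \ref{local}, which asserts that on every Riemannian manifold $(M,g)$ and for every $q$ in the admissible range ($q\geq 1$ if $m=1$, $q>m/2$ if $m\geq 2$) one has the inclusion $\IL^q_\loc(M)\subset\kat_\loc(M,g)$. Since the hypothesis places $w$ in exactly such an $\IL^q_\loc(M)$, this immediately yields $w\in\kat_\loc(M,g)$.

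Next, I would combine this with the already assumed $w\in\IL^2_\loc(M)$ to see that $w\in\kat_\loc(M,g)\cap\IL^2_\loc(M)$, which is precisely the electric-potential hypothesis in IV). Together with the completeness of $g$ and the hypotheses on $\alpha$ (an $\IL^4_\loc$ one-form with weak divergence in $\IL^2_\loc$) and the lower boundedness of $H^{\alpha,w}_g$ on $\ICC_{\c}(M)$, all assumptions of IV) are met, so IV) applies and gives essential self-adjointness of $H^{\alpha,w}_g$.

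There is no genuine obstacle in the corollary itself: the argument is a one-line reduction once Corollary \ref{local} is available. The real work is hidden in Corollary \ref{local}, whose proof rested on the Faber-Krahn control pair furnished by Example \ref{klke} together with the heat-kernel bound $\sup_y p_g(t,x,y)\leq Ca^{-m/2}\min(t,R(x)^2)^{-m/2}$ from Theorem \ref{rtt}, itself derived from the parabolic $\IL^1$-mean value inequality. Thus the only thing to check carefully is that the range of $q$ stated in the corollary matches precisely the range for which Corollary \ref{local} provides the inclusion $\IL^q_\loc\subset\kat_\loc$, which it does verbatim.
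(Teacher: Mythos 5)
Your proposal is correct and follows the paper's own argument exactly: invoke Corollary \ref{local} to deduce $w\in\kat_\loc(M,g)$ from the $\IL^q_\loc$ hypothesis, then observe that together with $w\in\IL^2_\loc(M)$ all hypotheses of the essential self-adjointness result of Grummt--Kolb \cite{grumt} (item IV) are satisfied. Nothing further is needed.
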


\begin{proof} If $(M,g)$ is geodesically complete, $\alpha$ is as above and $v\in\IL^2_\loc(M)\cap\kat_{\loc}(M,g)$ is real-valued such that $H^{\alpha,v}_{g}$ is bounded from below, then the essential self-adjointness of $H^{\alpha,v}_{g}$ has been shown in \cite{grumt}. Now the result follows from Corollary \ref{local}.
\end{proof}

Concerning the assumptions of Corollary \ref{esl}: If $(M,g)$ is geodesically complete, $\alpha\in\Gamma_{\IL^4_\loc}(M,T^*M)$ is real-valued with $\mathrm{grad}_g(\alpha)\in\IL^2_\loc(M)$, and $w\in\IL^2_\loc(M)$ is \emph{bounded from below}, then it is reasonable to expect that $H^{\alpha,w}_{g}$ automatically is essentially self-adjoint without any further $\IL^q_\loc$-assumptions on $w$; indeed this is known on manifolds for smooth $\alpha$\rq{}s \cite{Br}, and in $\IR^m$ for arbitrary $\alpha$\rq{}s. However, such an assumption on the electric potential is almost never satisfied in quantum physics (where we typically have $w(x)\sim -|x|^{-1}$; cf: Example \ref{many}). The point of Corollary \ref{esl} is that it does not require semiboundedness on the electric potential, but only on the operator itself, with the small price of requiring an additional $\IL^q_\loc$-assumption if $m>3$. In particular, in the most important case $m=3$, these assumptions are satisfied if $w(x)\sim -|x|^{-1}$.\vspace{2mm}

\subsection{Projecting Kato functions} In this section we prove:

\begin{Theorem}\label{elwd} Let $M\rq{}$ be another smooth connected manifold, let $g\in\IMM(M)$, $g\rq{}\in\IMM(M\rq{})$ and let $\pi: (M,g)\to (M\rq{},g\rq{})$ be a smooth surjective map such that
\begin{itemize}
\item $\pi$ is a Riemannian submersion, that is, the vector bundle (iso)morphism
$$
T\pi|_{\mathrm{ker}(T\pi)^{\perp_g}}:(\mathrm{ker}(T\pi)^{\perp_g},g)\longrightarrow (T M\rq{},g\rq{})
$$
 is fiberwise orthogonal
\item for all $y\in M\rq{}$ the fiber $\pi^{-1}(y)\subset (M,g)$ is a minimal submanifold, that is, the $g$-mean curvature 
$$
H^{\pi^{-1}(y),g}\in \Gamma_{\ICC}\big(\pi^{-1}(y),T^*\pi^{-1}(y)\odot T^*\pi^{-1}(y)\big)
$$
of the submanifold $\pi^{-1}(y)$ vanishes identically.
\end{itemize}
Then for all Borel $w:M\rq{}\to  \IR$, $t>0, x\in M$, there is the bound
\begin{align}\label{fpo}
\int_M p_{g }(t,x ,y) |w(\pi(y))|\Id\mu_{g }(y)\leq \int_{M\rq{}} p_{g\rq{}}(t,\pi(x),z) |w(z)|\Id\mu_{g\rq{}}(z).
\end{align}
In particular, for all $w\in \kat(M\rq{},g\rq{})$ one has $w\circ \pi\in  \kat(M,g)$, and, furthermore, if $(M,g)$ is stochastically complete, then so is $(M\rq{},g\rq{})$.
\end{Theorem}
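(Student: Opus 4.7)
The plan is to run the entire theorem through the probabilistic interpretation of the minimal heat kernel via Brownian motion, using Elworthy's result as the single geometric input that makes the projection harmless.

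First I would recall the Feynman--Kac-type representation: if $(\mathbb{X}_t)_{t<\zeta}$ denotes a Brownian motion on $(M,g)$ starting at $x$, with explosion time $\zeta$, then for every non-negative Borel function $f$ on $M$,
\begin{equation*}
\int_M p_g(t,x,y)\, f(y)\, \Id\mu_g(y)\;=\;\mathbb{E}_x\!\left[f(\mathbb{X}_t)\, 1_{\{t<\zeta\}}\right].
\end{equation*}
The crucial input of Elworthy \cite{elworthy} is this: because $\pi$ is a Riemannian submersion with minimal fibres, the projected process $\pi(\mathbb{X}_t)$, run up to the explosion time $\zeta$, is (in law) a Brownian motion on $(M',g')$ starting at $\pi(x)$, stopped at $\zeta$. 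Minimality is what kills the would-be drift coming from the mean curvature of the fibres; without it one only gets a Brownian motion with drift, and the identity below would fail. Denoting by $\zeta'\geq \zeta$ the explosion time of a full Brownian motion $(\mathbb{Y}_s)_{s<\zeta'}$ on $(M',g')$ started at $\pi(x)$, this identification yields
\begin{equation*}
\mathbb{E}_x\!\left[|w(\pi(\mathbb{X}_t))|\,1_{\{t<\zeta\}}\right]\;\leq\;\mathbb{E}_{\pi(x)}\!\left[|w(\mathbb{Y}_t)|\,1_{\{t<\zeta'\}}\right],
\end{equation*}
since we are merely enlarging the indicator set. Rewriting each side through the heat-kernel formula above gives the main inequality (\ref{fpo}).

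Next, for the Kato statement, I would fix $w\in\kat(M',g')$, apply (\ref{fpo}) with $|w|$, integrate in $s\in(0,t)$ and take $\sup_{x\in M}$; using that for each fixed $s$,
\begin{equation*}
\sup_{x\in M}\int_M p_g(s,x,y)|w(\pi(y))|\Id\mu_g(y)\;\leq\;\sup_{x'\in M'}\int_{M'} p_{g'}(s,x',z)|w(z)|\Id\mu_{g'}(z)
\end{equation*}
(surjectivity of $\pi$ is used here to take the $\sup$ over $x'=\pi(x)\in M'$ freely), Fubini and the definition of $\kat(M',g')$ force the double integral to tend to $0$ as $t\to 0+$; hence $w\circ\pi\in\kat(M,g)$.

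For the stochastic completeness statement I would simply specialize (\ref{fpo}) to $w\equiv 1$, getting
\begin{equation*}
\int_M p_g(t,x,y)\,\Id\mu_g(y)\;\leq\;\int_{M'} p_{g'}(t,\pi(x),z)\,\Id\mu_{g'}(z)\;\leq\; 1
\end{equation*}
for every $t>0$ and $x\in M$. Stochastic completeness of $(M,g)$ means the left-most quantity equals $1$, which forces the middle quantity to equal $1$ as well; since $\pi$ is surjective, every point of $M'$ arises as $\pi(x)$, and we conclude that $(M',g')$ is stochastically complete.

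The main obstacle is the first step: invoking the Elworthy identification cleanly, in particular matching conventions (the $1/2$-factor in $(1/2)\Delta_g$, measurability of $\pi(\mathbb{X})$, and the inequality $\zeta\leq \zeta'$ between the explosion times of the downstairs and upstairs Brownian motions). Once the probabilistic identification is stated precisely, the inequality (\ref{fpo}) is automatic and the two corollaries fall out by inspection.
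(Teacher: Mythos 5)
Your proposal is correct and follows essentially the same route as the paper: express the left-hand side of \eqref{fpo} as $\mathbb{E}\big[1_{\{t<\eta(x)\}}|w(\pi(X_t(x)))|\big]$, invoke Elworthy's theorem to replace $\pi(X(x))$ by the restriction of a downstairs Brownian motion $Y(\pi(x))$ up to time $\eta(x)\leq\zeta(\pi(x))$, enlarge the indicator from $\{t<\eta(x)\}$ to $\{t<\zeta(\pi(x))\}$, and rewrite. The two corollaries (Kato membership via the pointwise bound, stochastic completeness via $w\equiv 1$) are handled the same way; the paper simply omits these routine deductions, only remarking that it suffices to prove \eqref{fpo}. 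One tiny remark: in the Kato step, surjectivity is not actually needed for the direction $\sup_{x\in M}(\cdots)\leq\sup_{x'\in M'}(\cdots)$ — that inequality holds unconditionally — though surjectivity \emph{is} genuinely used in the stochastic-completeness deduction, as you note.
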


\begin{proof} Let us first remark that obviously it is enough to prove (\ref{fpo}).\\
The bound (\ref{fpo}) follows from a result by Elworthy (Theorem 10 E on p.256 in \cite{elworthy}), which states that under the given assumptions on $\pi$, Brownian motions on $(M,g)$ are projected to restrictions of Brownian motions on $(M\rq{},g\rq{})$. Since we have to deal with explosion times, the precise statement of the the latter result is a little technical which is why we recall some definitions:\\
Let $(N,h)$ be an arbitrary Riemannian manifold. Given a probability space $(\Omega,\IFF,\IP)$, a \emph{Brownian motion $X(x_0)$ on $(N,h)$ with starting point $x_0\in N$} is given by a pair $(X(x_0),\zeta(x_0))$ which satisfies the following assumptions:
\begin{itemize}
\item $\zeta(x_0):\Omega\to [0,\infty]$ is measurable such that, $\IP$-a.s., one has $\zeta(x_0)>0$ as well as
$$
1_{\{\zeta(x_0)<\infty\}}\leq 1_{\{\lim_{t\to \zeta(x_0)-}X_t(x_0)=\infty_N\}},
$$
where the limit $\lim_{t\to \zeta(x_0)-}X_t(x_0)=\infty_N$ is understood with respect to the (essentially uniquely determined) Alexandrov compactification $(N,\infty_N)$ of $N$

\item $X(x_0)$ is a process
$$
X(x_0): [0,\zeta(x_0))\times \Omega:=\{(t,\omega)\in [0,\infty)\times\Omega \}\longrightarrow N
$$
with continuous paths
\item for all $l\in\IN$, all bounded Borel functions $f:N\to \IR^l$ and all $0<t_1<\dots <t_l$, one has
\begin{align}\label{dkg}
&\mathbb{E}\left[1_{\{t_l<\zeta(x_0)\}}f_1(X_{t_1}(x_0))\dots f_l(X_{t_l}(x_0))\right]\\\nn
&=\int\dots\int p_h(\delta_0,x_0,x_1) f_1(x_1) \cdots p_h(\delta_{n-1},x_{n-1},x_n) f_n(x_n)\Id\mu_h(x_1)\cdots \Id\mu_h(x_n),
\end{align}
where $\delta_j:= t_{j+1}-t_j$, $t_0:=0$.
\end{itemize}
Note that in view of (\ref{con}), the assumption (\ref{dkg}) implies $X_0(x)=x$ $\IP$-a.s., as it should be. Furthermore, in the above situation, we will simply write
$$
X(x): [0,\zeta(x))\times \Omega \longrightarrow (N,h)
$$
to indicate that $X(x)$ is a Brownian motion on $(N,h)$ with starting point $x\in N$, and call a map (to be precise, a pair of maps $(X,\zeta)$)
$$
X: N\times  [0,\zeta)\times \Omega \longrightarrow (N,h)
$$
a Brownian family, if for all $x\in N$, 
$$
X(x): [0,\zeta(x))\times \Omega \longrightarrow (N,h)
$$
is a Brownian motion with starting point $x\in N$.\\
Returning now to the actual statement of the Theorem, given an arbitary $x\in M$, set $x\rq{}:=\pi(x)$. We can now formulate Elworthy\rq{}s result (cf. the proof of Theorem 10 E on p.256 in \cite{elworthy}): \emph{There exists a complete probability space $(\Omega,\IFF,\IP)$, and families of Brownian motions
$$
X :M\times [0,\eta)\times \Omega \longrightarrow (M,g),\>\>Y: M\rq{}\times[0,\zeta)\times \Omega \longrightarrow (M\rq{},g\rq{}),
$$
such that} 
$$
\IP\big\{\zeta(x\rq{}) \geq  \eta(x),\> Y(x\rq{})|_{[0,\eta(x))\times \Omega}=\pi(X(x))\big\}=1.
$$
The reader may find results of this type for more general diffusions than Brownian motion in \cite{liao}. It follows that for all $t>0$, $x\in M$,
\begin{align*}
&\int_M p_g(t,x,y) |w(\pi(y))|\Id\mu_g(y)=\mathbb{E}\left[1_{\{t<\eta(x)\}}|w(\pi(X_t(x))|\right]=\mathbb{E}\left[1_{\{t<\eta(x)\}}|w(Y_t(x\rq{}))|\right]\\
&\leq \mathbb{E}\left[1_{\{t<\zeta(x\rq{})\}}|w(Y_t(x'))|\right]=\int_{M\rq{}} p_{g\rq{}}(t,x\rq{},z) |w(z)|\Id\mu_{g\rq{}}(z),
\end{align*}
which proves everything.
\end{proof}

Given $g\in \IMM(M)$, $l\in\IN$, if we equip the product manifold $M^l=M\times \cdots \times M$ ($l$-times) with the product Riemannian structure $\otimes^lg=g\otimes \cdots \otimes g$ ($l$-times), the canonical projections
$$
\pi_{ij}: (M^l, \otimes^lg)\longrightarrow (M\times M,g\otimes g),\>\>i\ne j=1,\dots, l,
$$
satisfy the assumptions of the previous result. In particular, 
$$
w\in \kat(M\times M,g\otimes g)\>\Rightarrow \>w\circ \pi_{ij}\in \kat(M^l,\otimes^lg).
$$
Likewise, we have 
$$
w\in \kat(M,g)\>\Rightarrow \>w\circ \pi_{j}\in \kat(M^l,\otimes^lg),
$$
where $\pi_j:M^l\to M$ denotes the projection onto the $j$-th variable. This is precisely the situation that arises in many-body quantum mechanics:

\begin{Example}\label{many} Given $g\in \IMM(M)$ on the $3$-manifold $M$, assume that for some $C>0$ and all $t>0$, $x\in M$ one has $p_g(t,x,x)\leq Ct^{-3/2}$. Then the Coulomb potential
$$
V_g(x,y):=\f{1}{2}\int^{\infty}_0 p_g(s,x,y) \Id s
$$
is finite for all $x\ne y$, and using the Chapman-Kolomogorow identity one easily finds \cite{G6} $V_g(\bullet,y)\in\kat(M,g)$ for all fixed $y\in M$. Likewise, using the product rule (cf. Theorem 9.11 in \cite{buch} and the remark thereafter)
$$
p_{g\otimes g}\big(t,(x,y), (x\rq{},y\rq{})\big)=p_{g}(t,x,x\rq{})p_g(t, y,y\rq{}),\>\>(x,y),(x\rq{},y\rq{})\in M \times M,
$$
it is easily checked that 
$V_g\in \kat(M\times M,g\otimes g)$. It follows that for all $1\leq l_1,l_2\in\IN$, all $i=1,\dots,l_1$, $j=1,\dots, l_2$, $y_1\dots, y_{l_2}\in M$ one has 
$$
V_{g;ij;y_j}:= -V_g(\pi_{i},y_j)\in \kat(M^l,\otimes^lg).
$$
Likewise, for all $i,j=1,\dots,l_1$ with $i<j$ the potentials
$$
W_{g;ij}:= V_g(\pi_{ij})\in \kat(M^l,\otimes^lg)
$$
are Kato. Up to positive constants, the self-adjoint realizations of 
$$
H(g;y_1,\dots,y_{l_2}):=-(1/2)\Delta_{\otimes^lg}+\sum_{i=1}^{l_1}\sum^{l^2}_{j=1} V_{g;ij;y_j}+\sum_{i,j=1,\dots,l_1, i<j}W_{g;ij}
$$
in the complex Hilbert space\footnote{To be precise, we should actually consider $H(g,y_1,\dots,y_{l_2})$ on the closed subspace $\wedge^{l_1}\IL^2(M ,g)$ of $\IL^2(M^{l_1} ,\otimes^{l_1}g)$; this is essential for questions like stability of matter \cite{lis}; similar stability results have been recently also obtained by the author and Enciso on the Riemannan $3$-manifolds under consideration \cite{G6, enciso}} $\IL^2(M^{l_1} ,\otimes^{l_1}g)$ describe nonrelativistically the energy of $l_1$ electrons that live on $M$ under the influence of $l_2$ nuclei, where the $j$-th nucleus is considered to be fixed in $y_j$. Here, $V_{g;ij;x_j}$ is the interaction of the $i$-th electron with the $j$-th nucleus (which is thus attractive), and $W_{g;ij}$ the interaction of the $i$-th electron with the $j$-th electron (which is thus repulsive). 
\end{Example}

\appendix
\section{An $\IL^q(M,g)\to \IL^q(M,g)$-bound for Schrödinger operators with Kato potentials}

In this section we give a simple proof of:

\begin{Proposition}\label{bop} For every
\begin{itemize}
\item $g\in\IMM(M)$ 
\item Borel function $w:M\to \IR$ which can be decomposed as $w=w_+-w_-$ into Borel functions $w_{\pm}:M\to [0,\infty)$ with $w_+\in\IL^1_\loc(M)$, $w_-\in \kat(M,g)$,
\item $\delta>1$
\end{itemize}
there exists a constant $0\leq C(w_-,\delta, g)<\infty$, such that for all $t\geq 0$, $q\in [1,\infty]$ one has
$$
\left\|\mathrm{e}^{-t H^w_g}\right\|_{\IL^q(M,g)\to \IL^q(M,g)}\leq \delta\mathrm{e}^{t C(w_-,\delta, g)}.
$$
Above, the Schrödinger operator $H^w_g=-(1/2)\Delta_g+w$ is well-defined as the self-adjoint operator in $\IL^2(M,g)$ which corresponds to the closed semibounded (from below) densely defined symmetric sesqui-linear  form
$$
Q^w_g(f_1,f_2):=\f{1}{2}\int g^*(\Id f_1,\Id f_2)\Id\mu_g+ \int \overline{w f_1} f_2 \Id\mu_g
$$ 
with domain of definition $\dom(Q^w_g)=\mathsf{W}^{1,2}_0(M,g)\cap \IL^2(M,g,w_+)$.
\end{Proposition}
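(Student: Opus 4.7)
The plan is to reduce the bound to the $\IL^{\infty}$ endpoint via Feynman--Kac, dualize to $\IL^{1}$ using symmetry of the kernel, and then interpolate; the input that replaces a global lower bound on $w$ is Khasminskii's lemma applied to $w_-$.

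\emph{Khasminskii step.} Since $w_-\in\kat(M,g)$, for any $\alpha\in(0,1)$ one can pick $t_0=t_0(\alpha,w_-,g)>0$ such that
$$
\sup_{x\in M}\int_0^{t_0}\int_M p_g(s,x,y)w_-(y)\Id\mu_g(y)\Id s \leq \alpha.
$$
Expanding the exponential as a series, iterating Chapman--Kolmogorov, and reading off the result probabilistically (the classical Khasminskii argument) yields
$$
\sup_{x\in M}\mathbb{E}^x\!\left[1_{\{t_0<\zeta(x)\}}\mathrm{e}^{\int_0^{t_0}w_-(X_s)\Id s}\right]\leq \frac{1}{1-\alpha},
$$
where $(X(x),\zeta(x))$ denotes Brownian motion on $(M,g)$ started at $x$. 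Given $\delta>1$, set $\alpha:=1-1/\delta$; decomposing $t=nt_0+r$ with $0\leq r<t_0$ and applying the Markov property at the times $jt_0$ then gives
$$
\sup_{x\in M}\mathbb{E}^x\!\left[1_{\{t<\zeta(x)\}}\mathrm{e}^{\int_0^t w_-(X_s)\Id s}\right]\leq \delta\,\mathrm{e}^{tC(w_-,\delta,g)},\qquad C(w_-,\delta,g):=(\log\delta)/t_0.
$$

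\emph{Feynman--Kac and the $\IL^{\infty}$-bound.} The form sum $H^w_g$ admits the representation
$$
(\mathrm{e}^{-tH^w_g}f)(x)=\mathbb{E}^x\!\left[1_{\{t<\zeta(x)\}}\mathrm{e}^{-\int_0^t w(X_s)\Id s}f(X_t)\right]
$$
for $f\in\IL^2(M,g)\cap\IL^{\infty}(M,g)$, which one establishes by truncating $w_+\wedge n$, invoking the classical Feynman--Kac formula for bounded-below potentials, and sending $n\to\infty$ via a monotone form-convergence theorem on the operator side together with dominated convergence on the probabilistic side (the Khasminskii factor from the previous step supplies the integrable majorant). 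Since $w_+\geq 0$, dropping the factor $\mathrm{e}^{-\int_0^t w_+(X_s)\Id s}$ and invoking the Khasminskii bound yields pointwise
$$
|(\mathrm{e}^{-tH^w_g}f)(x)|\leq \|f\|_{\IL^{\infty}}\cdot\mathbb{E}^x\!\left[1_{\{t<\zeta(x)\}}\mathrm{e}^{\int_0^t w_-(X_s)\Id s}\right]\leq \delta\,\mathrm{e}^{tC}\|f\|_{\IL^{\infty}},
$$
and density upgrades this to $\|\mathrm{e}^{-tH^w_g}\|_{\IL^{\infty}\to\IL^{\infty}}\leq \delta\,\mathrm{e}^{tC}$.

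\emph{Duality and interpolation.} Since $w$ is real, $H^w_g$ is self-adjoint on $\IL^2(M,g)$ and its (a.e.\ defined) integral kernel is symmetric in its two variables; duality therefore produces the identical bound on $\IL^1\to\IL^1$. Riesz--Thorin interpolation between $q=1$ and $q=\infty$ then transfers the estimate to every $\IL^q$, $q\in[1,\infty]$. The main technical obstacle is the clean justification of the Feynman--Kac representation under the weak hypothesis $w_+\in\IL^1_{\loc}(M)$: one has to align the operator-level limit of the truncated problems (at the level of closed quadratic forms, in the spirit of Simon's monotone convergence theorem for forms) with the pathwise limit of the corresponding exponential functionals, where the latter is controlled uniformly by the Khasminskii-bounded factor involving $w_-$. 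Once this alignment is in place, the rest of the argument is essentially algebraic.
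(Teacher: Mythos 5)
Your proposal is correct and follows essentially the same path as the paper's proof: Feynman--Kac reduces everything to the operator $\mathrm{e}^{-tH^{-w_-}_g}$, Khasminskii's lemma gives the $\delta\mathrm{e}^{tC}$ bound on the exponential functional, this yields the $\IL^\infty\to\IL^\infty$ estimate, self-adjointness and positivity transfer it to $\IL^1\to\IL^1$ by duality (the paper makes your duality step explicit via a compact exhaustion and monotone convergence), and Riesz--Thorin handles the intermediate exponents. The only stylistic difference is that the paper simply \emph{records} the Feynman--Kac formula with references rather than sketching its justification via truncation of $w_+$, which you rightly flag as the main technical point to pin down.
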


We will use Riesz-Thorin\rq{}s Theorem for the proof:
\begin{Theorem}[Riesz-Thorin]\label{riesz} Let $(X,\mu_X)$ and $(Y,\mu_Y)$ be sigma-finite measure spaces, let $a_0,a_1,b_0,b_1\in [1,\infty]$, and assume that
$$
T:\IL^{a_0}(X,\mu_X)\cap \IL^{a_1}(X,\mu_X)\longrightarrow  \IL^{b_0}(Y,\mu_Y)\cap \IL^{b_1}(Y,\mu_Y)
$$ 
is a complex linear map. Assume further that there are numbers $C_0,C_1>0$ such that for all $f\in \IL^{a_0}(X,\mu_X)\cap \IL^{a_1}(X,\mu_X)$ one has
$$
\left\|Tf\right\|_{\IL^{b_0}(Y,\mu_Y)}\leq C_0\left\|f\right\|_{\IL^{a_0}(X,\mu_X)},\>\>\left\|Tf\right\|_{\IL^{b_1}(Y,\mu_Y)}\leq C_1 \left\|f\right\|_{\IL^{a_1}(X,\mu_X)}.
$$ 
Then for any $r\in [0,1]$, there is a unique bounded extension 
$$
T_{a_r,b_r}\in \ILL\left(\IL^{a_r}(X,\mu_X),\IL^{b_r}(Y,\mu_Y)\right)
$$
of $T$, which satisfies
$$
\left\|T_{a_r,b_r}\right\|_{\IL^{a_r}(X,\mu_X)\to \IL^{b_r}(Y,\mu_Y)}\leq C_0^{1-r} C_1^{r},\>\text{ where }\>\f{1}{a_r}:=\f{1-r}{a_0}+\f{r}{a_1},\>\f{1}{b_r}:=\f{1-r}{b_0}+\f{r}{b_1},
$$
with the usual conventions $1/\infty:=0$, $1/0:=\infty$.
\end{Theorem}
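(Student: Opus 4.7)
The plan is to carry out Thorin's classical complex-interpolation argument: exhibit a holomorphic family of functions that interpolates between the two endpoint exponents and apply Hadamard's three-lines lemma. I write $p^{\ast}\in[1,\infty]$ for the conjugate H\"older exponent of $p$, and adopt throughout the convention that if any of $a_0,a_1,b_0,b_1$ equals $\infty$, the corresponding term $1/\infty$ appearing in $\alpha,\beta$ below is taken to be $0$.

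First, I would reduce to a bilinear estimate for simple functions. By density of simple functions (supported on sets of finite $\mu_X$-measure) in $\IL^{a_r}(X,\mu_X)$, and by duality of $\IL^{b_r}(Y,\mu_Y)$ with $\IL^{b_r^\ast}(Y,\mu_Y)$, it suffices to prove
$$
\Big|\int_Y (Tf)\,g\,\Id\mu_Y\Big|\leq C_0^{1-r}C_1^{r}
$$
for simple $f$ on $X$ and simple $g$ on $Y$ supported on sets of finite measure, normalized so that $\|f\|_{\IL^{a_r}(X,\mu_X)}=\|g\|_{\IL^{b_r^\ast}(Y,\mu_Y)}=1$. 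Degenerate cases where $a_r=\infty$ occur only if $a_0=a_1=\infty$, and then the claim is immediate from the hypothesis.

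Next, I would build the Thorin family. Writing $f=\sum_j \rho_j\, e^{i\theta_j}\mathbf{1}_{A_j}$ and $g=\sum_k \sigma_k\, e^{i\varphi_k}\mathbf{1}_{B_k}$ with disjoint finite-measure sets and $\rho_j,\sigma_k>0$, on the closed strip $S:=\{z\in\IC:0\leq\mathrm{Re}(z)\leq 1\}$ define the entire functions
$$
\alpha(z):=a_r\Big(\f{1-z}{a_0}+\f{z}{a_1}\Big),\quad \beta(z):=b_r^\ast\Big(\f{1-z}{b_0^\ast}+\f{z}{b_1^\ast}\Big),
$$
and set
$$
f_z:=\sum_j \rho_j^{\alpha(z)}e^{i\theta_j}\mathbf{1}_{A_j},\quad g_z:=\sum_k \sigma_k^{\beta(z)}e^{i\varphi_k}\mathbf{1}_{B_k},\quad F(z):=\int_Y (Tf_z)\,g_z\,\Id\mu_Y.
$$
Since each $f_z$ lies in $\IL^{a_0}(X,\mu_X)\cap\IL^{a_1}(X,\mu_X)$, the expression $Tf_z$ is well-defined; as $f_z$ and $g_z$ are finite linear combinations with entire scalar coefficients, $F$ is entire, bounded on $S$, and continuous on $\overline{S}$. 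Observe $\alpha(r)=\beta(r)=1$, so $F(r)=\int_Y (Tf)\,g\,\Id\mu_Y$.

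Finally, I would estimate $|F|$ on the two vertical edges. On $\mathrm{Re}(z)=0$ a direct calculation gives $\|f_z\|_{\IL^{a_0}(X,\mu_X)}=\|f\|_{\IL^{a_r}(X,\mu_X)}^{a_r/a_0}=1$ and, analogously, $\|g_z\|_{\IL^{b_0^\ast}(Y,\mu_Y)}\leq 1$; combined with the hypothesis $\|Tf_z\|_{\IL^{b_0}}\leq C_0\|f_z\|_{\IL^{a_0}}$ and H\"older's inequality, one obtains $|F(z)|\leq C_0$ on the left edge, and symmetrically $|F(z)|\leq C_1$ on $\mathrm{Re}(z)=1$. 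Hadamard's three-lines lemma applied to $F$ on $S$ then yields $|F(r)|\leq C_0^{1-r}C_1^{r}$, which is exactly the desired bilinear bound. The unique bounded extension $T_{a_r,b_r}$ is obtained by density of $\IL^{a_0}\cap\IL^{a_1}$ in $\IL^{a_r}$. The main obstacle is the consistent handling of the $\infty$-endpoints: one must make sure that the convention $a_r/\infty=0$ keeps the identities $\|f_z\|_{\IL^{a_0}}=\|f\|_{\IL^{a_r}}^{a_r/a_0}$ valid, that the analytic family remains in $\IL^{a_0}\cap\IL^{a_1}$ uniformly in $z\in S$, and that the density and duality steps still apply in the sigma-finite setting when one endpoint is $\infty$.
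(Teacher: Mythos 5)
This statement is the classical Riesz--Thorin interpolation theorem, which the paper does not prove at all: it is quoted in the appendix as a known tool for the proof of Proposition \ref{bop}. So there is no paper proof to compare against; what matters is whether your argument stands on its own, and it does. Your proposal is the standard Thorin complex-interpolation proof: reduction to a bilinear estimate over simple functions via duality in the sigma-finite setting, the analytic family $f_z,g_z$ with affine exponents $\alpha(z),\beta(z)$ normalized so that $\alpha(r)=\beta(r)=1$, boundedness and holomorphy of $F$ on the strip (clear, since $F(z)$ is a finite sum of terms $\rho_j^{\alpha(z)}\sigma_k^{\beta(z)}$ times fixed finite integrals), the endpoint estimates $|F|\leq C_0$ and $|F|\leq C_1$ on the two edges via H\"older and the hypotheses, and the three-lines lemma. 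This is correct. Two routine points you gloss over and should spell out in a full write-up: (i) the formula for $\beta(z)$ degenerates when $b_0=b_1=1$ (so $b_r^\ast=\infty$ and the prefactor is $\infty\cdot 0$); in that case, as in the case $a_0=a_1$, simply take $g_z\equiv g$ (respectively $f_z\equiv f$), since no interpolation in that variable is needed. (ii) The bilinear bound gives $\|Tf\|_{\IL^{b_r}(Y,\mu_Y)}\leq C_0^{1-r}C_1^r\|f\|_{\IL^{a_r}(X,\mu_X)}$ for simple $f$; to get it for all $f\in\IL^{a_0}(X,\mu_X)\cap\IL^{a_1}(X,\mu_X)$ (so that the density extension really extends $T$, as the theorem asserts) approximate $f$ by truncated simple functions converging simultaneously in $\IL^{a_0}$ and $\IL^{a_1}$, note convergence in $\IL^{a_r}$ by the interpolation inequality $\|h\|_{\IL^{a_r}}\leq\|h\|_{\IL^{a_0}}^{1-r}\|h\|_{\IL^{a_1}}^{r}$, and identify limits through an a.e.\ convergent subsequence of $Tf_n$ together with Fatou. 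Neither point is a genuine gap; both are the bookkeeping you yourself flag concerning the $\infty$-endpoints.
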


\begin{proof}[Proof of Proposition \ref{bop}] Let us record the Feynman-Kac formula
$$
\mathrm{e}^{-t H^w_g}f(x)=\mathbb{E}\left[1_{\{t<\zeta(x)\}}\mathrm{e}^{-\int^t_0w(X_s(x))\Id s}f(X_t(x))\right],\>\>f\in \IL^2(M,g),
$$
where
$$
X: M\times [0,\zeta)\times \Omega \longrightarrow (M,g)
$$
is a family of Brownian motions on $(M,g)$. We extend $\mathrm{e}^{-t H^w_g}$ to $\IL^q(M,g)$ by means of the rhs of the latter formula. Clearly, with this convention, we have
$$
\left|\mathrm{e}^{-t H^{w}_g}f(x)\right|\leq  \mathrm{e}^{-t H^{-w_-}_g}\left|f\right|(x)
$$
and it remains to estimate the operator norms of $\mathrm{e}^{-t H^{-w_-}_g}$ in each case.\\
By an adoption of a standard argument \cite{aizi} that relies on the Markoff property of $B$ and the Kato property of $w_-$, one finds the following exponential estimate \cite{G97}: For every $\delta>1$ there exists a finite $C(w_-,\delta,g)\geq 0$ such that for all $t\geq 0$ one has 
$$
C(w_-,t,g):=\sup_{x\in M}\mathbb{E}\left[1_{\{t<\zeta(x)\}}\mathrm{e}^{-\int^t_0w(X_s(x))\Id s}\right]\leq \delta\mathrm{e}^{t C(w_-,\delta,g)}.
$$
Coming to the proof of the actual statement of the proposition, note first that the case $q=\infty$ now follows immediately by what we have said above. \\
For the case $q=1$, let $h\in \IL^q(M,g)$, and let $\bigcup_n K_n =M$ be a relatively compact exhaustion of $M$. Then we have 
\begin{align*}
\int  |\mathrm{e}^{-t H^{-w_-}_g}   h| \cdot 1_{K_n} \z_g\leq \int   |h| \mathrm{e}^{-t H^{-w_-}_g}    1_{K_n}\z_g\leq \left\|\mathrm{e}^{-t H^{-w_-}_g}  \right\|_{\IL^{\infty}(M,g)\to \IL^{\infty}(M,g)}  \left\|h\right\|_{\IL^1(M,g)},
\end{align*}
where we have used the self-adjointness of $\mathrm{e}^{-t H^{-w_-}_g}   $ for the first inequality, and the $q=\infty$ case for the inequality. Using monotone convergence this implies
$$
\left\|\mathrm{e}^{-t H^{-w_-}_g  }  h\right\|_{\IL^{1}(M,g)}\leq  C(w_-,t,g)\left\|h\right\|_{\IL^{1}(M,g)}.
$$
We have shown so far that
$$
\left\|\mathrm{e}^{-t H^{-w_-}_g}  \right\|_{\IL^1(M,g)\to\IL^1(M,g)},\left\| \mathrm{e}^{-t H^{-w_-}_g}  \right\|_{\IL^{\infty}(M,g)\to \IL^{\infty}(M,g)}\leq C(w_-,t,g). 
$$ 
In case $1<q<\infty$, applying Riesz-Thorin's theorem with $T= \mathrm{e}^{-t H_g^{-w_-}}  $, $a_0=b_0=1$, $a_1=b_1=\infty$, $C_0=C_1=C (w_-,t,g)$, $r=1-1/q$ we get 
$$
\left\| \mathrm{e}^{-t H^{-w_-}_g}  \right\|_{\IL^{q}(M,g)\to \IL^{q}(M,g)}\leq C (w_-,t,g),
$$ 
which completes the proof.\\
\end{proof}

The essential point of Proposition \ref{bop} is that the bound is of the form $\delta\mathrm{e}^{t C_{\delta}}$ and not simply $C_1\mathrm{e}^{t C_2}$ (which corresponds to a weaker \lq\lq{}contractive Dynkin\rq\rq{} assumption on $w_-$; cf. \cite{peter}), and precisely this stronger bound has been used with $q=\infty$ recently in the context of the Riemannian total variation in \cite{G97}.\\


\begin{thebibliography}{99}

\bibitem{aizi} Aizenman, M. \& Simon, B.: {\it Brownian motion and Harnack inequality for Schrödinger operators. } Comm. Purre. Appl. Math. 35 (2) (1982), 209--273. 



\bibitem{Br} Braverman, M. \&  Milatovich, O. \& Shubin, M.: {\it Essential self-adjointness of Schrödinger-type operators on manifolds.} Russian Math. Surveys  57  (2002),  no. 4, 641--692.





\bibitem{elworthy} Elworthy, K. D.:\emph{ Stochastic differential equations on manifolds.} London Mathematical Society Lecture Note Series, 70. Cambridge University Press, Cambridge-New York, 1982.

\bibitem{enciso} Enciso, A.: {\it Coulomb Systems on Riemannian Manifolds and Stability of Matter.} Ann. Henri Poincare 12 (2011), 723--741.




\bibitem{gri}  Grigor\rq{}yan, A: \emph{Estimates of heat kernels on Riemannian manifolds.} Spectral theory and geometry (Edinburgh, 1998), 140--225, London Math. Soc. Lecture Note Ser., 273, Cambridge Univ. Press, Cambridge, 1999.

\bibitem{buch} Grigor'yan, A.: {\it Heat kernel and analysis on manifolds.} AMS/IP Studies in Advanced Mathematics, 47. American Mathematical Society, Providence, RI; International Press, Boston, MA, 2009.

\bibitem{grumt} Grummt, R. \& Kolb, M.:\emph{ Essential selfadjointness of singular magnetic Schrödinger operators on Riemannian manifolds.} J. Math. Anal. Appl. 388 (2012), no. 1, 480--489.




\bibitem{G97} Güneysu, B. \& Pallara, D.: \emph{Functions with bounded variation on a class of Riemannian manifolds with Ricci curvature unbounded from below}.  Math. Ann. 363 (2015), no. 3-4, 1307--1331.

\bibitem{G6} Güneysu, B.: {\it Nonrelativistic Hydrogen type stability problems on nonparabolic 3-manifolds}. Annales Henri Poincaré 13 (2012), 1557--1573.

\bibitem{G1} G\"uneysu, B.: \emph{Kato's inequality and form boundedness of Kato potentials on arbitrary Riemannian manifolds.}  Proc. Amer. Math. Soc. 142 (2014), no. 4, 1289--1300.


\bibitem{gerw} Güneysu, B. \& Post, O.: \emph{Path integrals and the essential self-adjointness of differential operators on noncompact manifolds.} Math. Z. 275 (2013), no. 1-2, 331--348. 




\bibitem{G2} Güneysu, B.: {\it On generalized Schrödinger semigroups.} J. Funct. Anal. 262 (2012), 4639--4674.

%
%
%
%













\bibitem{kato} Kato, T.: {\it Schrödinger operators with singular potentials.} Israel J. Math. 13 (1972).
 



\bibitem{kt} Kuwae, K. \& Takahashi, M.: {\it Kato class measures of symmetric Markov processes under heat kernel estimates.}  J. Funct. Anal. 250 (2007), no. 1, 86--113.

%

\bibitem{lis} Lieb, E.H. \& Seiringer, R.: \emph{The stability of matter in quantum mechanics.} Cambridge University Press, Cambridge, 2010.

%

\bibitem{li2} Li, P. \& Wang, J.: \emph{Mean value inequalities.} Indiana Univ. Math. J. 48 (1999), no. 4, 1257--1283. 

\bibitem{liao} Liao, M.: \emph{Factorization of diffusions on fibre bundles.} Trans. Amer. Math. Soc. 311 (1989), no. 2, 813--827.
  


%




\bibitem{mati} Ouhabaz, E.-M. \& Stollmann, P. \& Sturm, K.-T. \& Voigt, J.: \emph{The Feller property for absorption semigroups.} J. Funct. Anal. 138 (1996), no. 2, 351--378. 

%






%

\bibitem{peter} Stollmann, P. \& Voigt, J.: {\it Perturbation of Dirichlet forms by measures.} Potential Anal. 5 (1996), no. 2, 109--138.

\bibitem{sturm} Sturm, K.-T.: \emph{Schrödinger semigroups on manifolds.} J. Funct. Anal. 118 (1993), no. 2, 309--350.

%



\end{thebibliography}
\end{document}